\let\old@tocline\@tocline
\let\section@tocline\@tocline
\newcommand{\subsection@dotsep}{4.5}
\newcommand{\subsubsection@dotsep}{4.5}
     \leaders\hbox{$\m@th
        \mkern \subsection@dotsep mu\hbox{.}\mkern \subsection@dotsep mu$}\hfill
\let\subsection@tocline\@tocline
\let\@tocline\old@tocline
     \leaders\hbox{$\m@th
        \mkern \subsubsection@dotsep mu\hbox{.}\mkern \subsubsection@dotsep mu$}\hfill
\let\subsubsection@tocline\@tocline
\let\@tocline\old@tocline
\let\old@l@subsection\l@subsection
\let\old@l@subsubsection\l@subsubsection
\def\@tocwriteb#1#2#3{%
  \begingroup
    \@xp\def\csname #2@tocline\endcsname##1##2##3##4##5##6{%
      \ifnum##1>\c@tocdepth
      \else \sbox\z@{##5\let\indentlabel\@tochangmeasure##6}\fi}%
    \csname l@#2\endcsname{#1{\csname#2name\endcsname}{\@secnumber}{}}%
  \endgroup
  \addcontentsline{toc}{#2}%
    {\protect#1{\csname#2name\endcsname}{\@secnumber}{#3}}}%
\newlength{\@tocsectionindent}
\newlength{\@tocsubsectionindent}
\newlength{\@tocsubsubsectionindent}
\newlength{\@tocsectionnumwidth}
\newlength{\@tocsubsectionnumwidth}
\newlength{\@tocsubsubsectionnumwidth}
\newcommand{\settocsectionnumwidth}[1]{\setlength{\@tocsectionnumwidth}{#1}}
\newcommand{\settocsubsectionnumwidth}[1]{\setlength{\@tocsubsectionnumwidth}{#1}}
\newcommand{\settocsubsubsectionnumwidth}[1]{\setlength{\@tocsubsubsectionnumwidth}{#1}}
\newcommand{\settocsectionindent}[1]{\setlength{\@tocsectionindent}{#1}}
\newcommand{\settocsubsectionindent}[1]{\setlength{\@tocsubsectionindent}{#1}}
\newcommand{\settocsubsubsectionindent}[1]{\setlength{\@tocsubsubsectionindent}{#1}}
\renewcommand{\l@section}{\section@tocline{1}{\@tocsectionvskip}{\@tocsectionindent}{\@tocsectionnumwidth}{\@tocsectionformat}}%
\renewcommand{\l@subsection}{\subsection@tocline{1}{\@tocsubsectionvskip}{\@tocsubsectionindent}{\@tocsubsectionnumwidth}{\@tocsubsectionformat}}%
\renewcommand{\l@subsubsection}{\subsubsection@tocline{1}{\@tocsubsubsectionvskip}{\@tocsubsubsectionindent}{\@tocsubsubsectionnumwidth}{\@tocsubsubsectionformat}}%
\newcommand{\@tocsectionformat}{}
\newcommand{\@tocsubsectionformat}{}
\newcommand{\@tocsubsubsectionformat}{}
\def\csname toc@1format\endcsname{\@tocsectionformat}
\def\csname toc@2format\endcsname{\@tocsubsectionformat}
\def\csname toc@3format\endcsname{\@tocsubsubsectionformat}
\newcommand{\settocsectionformat}[1]{\renewcommand{\@tocsectionformat}{#1}}
\newcommand{\settocsubsectionformat}[1]{\renewcommand{\@tocsubsectionformat}{#1}}
\newcommand{\settocsubsubsectionformat}[1]{\renewcommand{\@tocsubsubsectionformat}{#1}}
\newlength{\@tocsectionvskip}
\newcommand{\settocsectionvskip}[1]{\setlength{\@tocsectionvskip}{#1}}
\newlength{\@tocsubsectionvskip}
\newcommand{\settocsubsectionvskip}[1]{\setlength{\@tocsubsectionvskip}{#1}}
\newlength{\@tocsubsubsectionvskip}
\newcommand{\settocsubsubsectionvskip}[1]{\setlength{\@tocsubsubsectionvskip}{#1}}
\patchcmd{\tocsection}{\indentlabel}{\makebox[\@tocsectionnumwidth][l]}{}{}
\patchcmd{\tocsubsection}{\indentlabel}{\makebox[\@tocsubsectionnumwidth][l]}{}{}
\patchcmd{\tocsubsubsection}{\indentlabel}{\makebox[\@tocsubsubsectionnumwidth][l]}{}{}
\newcommand{\@sectypepnumformat}{}
\renewcommand{\contentsline}[1]{%
  \expandafter\let\expandafter\@sectypepnumformat\csname @toc#1pnumformat\endcsname%
  \csname l@#1\endcsname}
\newcommand{\@tocsectionpnumformat}{}
\newcommand{\@tocsubsectionpnumformat}{}
\newcommand{\@tocsubsubsectionpnumformat}{}
\newcommand{\setsectionpnumformat}[1]{\renewcommand{\@tocsectionpnumformat}{#1}}
\newcommand{\setsubsectionpnumformat}[1]{\renewcommand{\@tocsubsectionpnumformat}{#1}}
\newcommand{\setsubsubsectionpnumformat}[1]{\renewcommand{\@tocsubsubsectionpnumformat}{#1}}
\renewcommand{\@tocpagenum}[1]{%
  \hfill {\mdseries\@sectypepnumformat #1}}
\let\oldappendix\appendix
\renewcommand{\appendix}{%
  \leavevmode\oldappendix%
  \addtocontents{toc}{%
    \protect\settowidth{\protect\@tocsectionnumwidth}{\protect\@tocsectionformat\sectionname\space}%
    \protect\addtolength{\protect\@tocsectionnumwidth}{2em}}%
}
\let\oldtableofcontents\tableofcontents
\renewcommand{\tableofcontents}{%
  \vspace*{-\linespacing}
  \oldtableofcontents}
\numberwithin{equation}{section}
\theoremstyle{plain}
\newtheorem{thm}{Theorem}[section]
\newtheorem{defi}[thm]{Definition}
\newtheorem{lem}[thm]{Lemma}
\theoremstyle{remark}
\newtheorem{rema}[thm]{Remark}
\newcommand{\Z}{\mathbb{Z}}
\newcommand{\C}{\mathbb{C}}
\newcommand{\ii}{\mathrm{i}}
\newcommand{\hypref}[2]{\ifx\href\asklfhas #2\else\href{#1}{#2}\fi}
\newcommand{\Secref}[1]{Section~\ref{#1}}
\newcommand{\Appref}[1]{Appendix~\ref{#1}}
\renewcommand{\eqref}[1]{(\ref{#1})}
\def\[{\begin{equation}}
\def\]{\end{equation}}
\def\<{\begin{eqnarray}}
\def\>{\end{eqnarray}}
\title[]{New determinants in the 8VSOS model \\ with domain-wall boundaries}
\author{W. Galleas}
\address{Institut f\"ur Theoretische Physik, Eidgen\"ossische Technische Hochschule Z\"urich, Wolfgang-Pauli-Strasse 27, 8093 Z\"urich, Switzerland}
\email{galleasw@phys.ethz.ch}
\subjclass[2010]{82B23; 39B32}
\keywords{Elliptic integrable systems, domain-wall boundaries}
\thanks{The work of W.G. is partially supported by the Swiss National Science Foundation through the NCCR SwissMAP and by grant no. 615203 from the European Research Council under the FP7.}
\begin{document}

\begin{abstract}
In this letter we show the partition function of the 8VSOS model with domain-wall boundaries satisfies the same type of functional equations as its six-vertex model counterpart. We then use these refined functional equations to obtain novel determinantal representations for the aforementioned partition function.

\end{abstract}

\maketitle

\tableofcontents

\section{Introduction} \label{sec:INTRO}

Exactly solvable models of Statistical Mechanics usually have a rich mathematical structure on display and some of these structures are not limited to a single model. 
For instance, Baxter's \textsc{t-q} relations can be found in several integrable two-dimensional lattice models with toroidal boundary conditions \cite{Baxter_1971, Baxter_book}; and for such models we also find their transfer matrices' spectrum parameterized by solutions of Bethe ansatz equations \cite{Lieb_1967, Lieb_review}.
Moreover, some special determinants have also made a striking appearance in the computation of correlation functions of integrable models \cite{Korepin_book}; and it is worth remarking this feature goes back to Kaufman and Onsager's works on the two-dimensional Ising model \cite{Onsager_Kauf_1949}. See also \cite{Deift_2013} and references therein for a review on the subject.

Correlation functions of certain integrable systems have also been shown to be closely related to partition functions of lattice models with
particular boundary conditions. For instance, Korepin has introduced in \cite{Korepin_1982} the six-vertex model with domain-wall boundary conditions as a building block of correlation functions of the \textsc{xxz} spin-chain. Interestingly, the partition function of such six-vertex model can also be expressed as a determinant as shown in \cite{Izergin_1987}. 
This scenario is not limited to the six-vertex model and it has also been established for the 8VSOS model, also known as elliptic $\mathfrak{gl}_2$ solid-on-solid model, under certain constraints \cite{Terras_2013a, Terras_2013b}. In that case the relevant partition function with domain-wall boundaries has been studied in \cite{Pakuliak_2008, Rosengren_2009, Galleas_2012, Galleas_2013, Galleas_2016a, Galleas_2016b} where representations in terms of contour integrals and determinants have been obtained. As far as the above-mentioned  determinantal formulae are concerned, the partition function of the 8VSOS model with domain-wall boundary conditions has been written as a sum of Frobenius determinants in \cite{Rosengren_2009}. Alternatively, the partition function of the 8VSOS model has also been written as a formal single determinant in \cite{Galleas_2016a} and as a continuous family of determinants in \cite{Galleas_2016b}. 
In particular, it is important to remark the latter determinantal formulae originate constructively from the direct resolution of certain functional equations satisfied by the model's partition function.

Both partition functions of the six-vertex and the 8VSOS models with domain-wall boundaries were shown in \cite{Galleas_2010, Galleas_2013, Galleas_2016a, Galleas_2016b} to satisfy certain functional equations originated from the Yang-Baxter algebra and/or its dynamical version.
However, although the six-vertex model can be regarded as a special limit of the more general 8VSOS model, the structure of the functional equations obtained for these models in \cite{Galleas_2013, Galleas_2016a, Galleas_2016b} differs considerably.
This is precisely one of the points we intend to address in the present work. More exactly, here we intend to show the partition function of the 8VSOS model with domain-wall boundary conditions also satisfies the same type of functional equations as its six-vertex model counterpart.

The functional equation in discussion simply reads 
\[
\sum_{i=0}^n \mathrm{M}_i \; \mathcal{F}(x_0, x_1, \dots , \widehat{x_i}, \dots , x_n) = 0 \nonumber 
\]
with $\mathcal{F}$ an unknown $n$-variable symmetric function and coefficients $\mathrm{M}_i= \mathrm{M}_i (x_0, x_1 , \dots , x_n)$ on $(n+1)$-variables $x_i \in \C$.
One important aspect of such equation seems to be its versatility in accommodating several quantities associated to integrable systems.
For instance, it can be used to describe partition functions with special boundary conditions \cite{Galleas_2013, Galleas_2016a, Galleas_2016b, Galleas_Lamers_2014, Lamers_2015} and scalar products of Bethe vectors \cite{Galleas_SCP, Galleas_openSCP, Galleas_2016c}.
Moreover, our functional equation seems to be closely related to the theory of integrable differential equations and it was shown in \cite{Galleas_2017} to correspond to an analogue of the \emph{auxiliar linear problem} leading to \emph{Lax equation} \cite{Lax_1968}. In this way, the above-mentioned functional equation can be regarded as a common structure shared by several integrable systems. Moreover, determinantal solutions emerge naturally from such type of equations as shown in \cite{Galleas_2016, Galleas_2016b}; and this is the feature we intend to exploit in this work for the 8VSOS model with domain-wall boundaries.

This paper is organized as follows. In \Secref{sec:CONV} we describe briefly the quantities of interest for this work and introduce conventions employed throughout this paper. We then refer the reader to \cite{Galleas_2016a, Galleas_2016b} for a more detailed discussion on those quantities. In \Secref{sec:CONV} we also present the functional equations satisfied by the partition function of interest; which have been previously derived with the aid of the dynamical Yang-Baxter algebra. \Secref{sec:FUN} is then devoted to presenting modified versions of the original functional equations and also to showing how they can be brought to the common structure above discussed. We then use \Secref{sec:DET} to discuss the resolution of such equations and concluding remarks are presented in \Secref{sec:REM}.

\section{Partition function} \label{sec:CONV}

The solid-on-solid model of interest here consists of the juxtaposition of $L\times L$ square plaquettes to which statistical weights are assigned. More precisely, we consider coordinates $(i,j) \in \mathscr{L} \times \mathscr{L}$ with $\mathscr{L} \coloneqq \{ 1, 2, \dots, L+1 \}$ and write 
$w_{i,j}$ for the statistical weight associated to the plaquette enclosed by coordinates $(i,j)$, $(i,j+1)$, $(i+1,j)$ and $(i+1,j+1)$.
Statistical fluctuations of a given plaquette $w_{i,j}$ are then characterized by the set of variables $\{ h_{i,j}, h_{i,j+1}, h_{i+1,j} , h_{i+1,j+1}     \}$ with $h_{i,j}$ usually referred to as \emph{height function}.
Moreover, here we will be considering $h_{i,j} = \tau + n_{i,j} \gamma$  with $\tau, \gamma \in \C$ and $n_{i,j} \in \Z$; and also restrict the height functions in such a way that $h_{i,j}$ and $h_{i',j'}$ at neighboring sites only differ by $\pm \gamma$.

We then want to associate a partition function to such statistical system and for that we still need to assign explicit weights $w_{i,j}$ for allowed configurations of plaquettes; and declare the boundary conditions under consideration. As for Baxter's 8VSOS model we have
\< \label{BW}
w_{ij} \begin{pmatrix} \tau \pm \gamma & \tau \\ \tau & \tau \mp \gamma \end{pmatrix} &=& [x_i - \mu_j + \gamma]  \nonumber \\
w_{ij} \begin{pmatrix} \tau \pm \gamma & \tau \pm 2\gamma \\ \tau & \tau \pm \gamma \end{pmatrix} &=& [\tau \pm \gamma] [x_i - \mu_j] [\tau]^{-1} \nonumber \\ 
w_{ij} \begin{pmatrix} \tau \pm \gamma & \tau \\ \tau & \tau \pm \gamma \end{pmatrix} &=&  [\tau \pm x_i \mp \mu_j] [\gamma] [\tau]^{-1}
\>
with $[x]$ corresponding to a Jacobi theta function. More precisely, 
\[ \label{theta}
[x] \coloneqq \frac{1}{2} \sum_{n = -\infty}^{+ \infty} (-1)^{n-\frac{1}{2}} p^{(n+\frac{1}{2})^2} e^{-(2n+1) x}
\]
with $x_i, \mu_j \in \mathbb{C}$ and fixed elliptic nome  $0 < p < 1$. According to the conventions of \cite{Whittaker_Watson_book}, the function
$[x]$ equals the Jacobi theta-function $\Theta_1 (\ii x, \nu)$ with $p = e^{\ii \pi \nu}$. Lastly, we need to declare the model's boundary conditions and here we will be considering domain-walls with $h_{1,j} = h_{j,1} = \tau + (L+1-j)\gamma$ and $h_{L+1,j} = h_{j,L+1} = \tau + (j-1)\gamma$.
In this way, the partition function of the inhomogeneous 8VSOS model with domain-wall boundaries reads
\[
\label{PF}
\mathcal{Z}_{\tau} (x_1, x_2, \dots , x_L \mid \mu_1, \mu_2, \dots , \mu_L) \coloneqq \sum_{\{ h_{i,j} \}} \prod_{i,j=1}^{L} w_{ij} \begin{pmatrix} h_{i+1,j} & h_{i+1,j+1} \\ h_{i,j} & h_{i,j+1} \end{pmatrix} \; .
\]
Although both parameters $x_i$ and $\mu_j$ correspond to lattice inhomogeneities, in the literature they are usually referred to as \emph{spectral} and \emph{inhomogeneity parameters} respectively. Moreover, $\tau$ is usually referred to as \emph{dynamical parameter} whilst $\gamma$ receives the name \emph{anisotropy parameter}. The latter is fixed in our analysis and here we omit its dependence in the LHS of \eqref{PF}.

The partition function \eqref{PF} enjoys remarkable properties due to the choice of statistical weights \eqref{BW} as discussed in \cite{Galleas_2016a, Galleas_2016b}. In particular, it is a symmetric function with respect to the variables $x_i$ and, in order to simplify our notation, it is convenient to introduce the following conventions. 
\begin{defi} \label{XXX}
Let $\mathrm{X} \coloneqq \{ x_1, x_2 , \dots , x_L \}$ and additionally write $\mathrm{X}_i^{\alpha} \coloneqq \mathrm{X} \cup \{ x_{\alpha} \} \backslash \{ x_i \}$ and $\mathrm{X}_{i,j}^{\alpha, \beta} \coloneqq \mathrm{X} \cup \{x_{\alpha}, x_{\beta} \} \backslash \{x_i, x_j \}$.
\end{defi}

\begin{rema}
Taking into account Definition \ref{XXX}, we shall then simply write $\mathcal{Z}_{\tau} ( \mathrm{X} ) = \mathcal{Z}_{\tau} (x_1, x_2, \dots , x_L \mid \mu_1, \mu_2, \dots , \mu_L)$ for the partition function \eqref{PF}.
\end{rema}

Now, given the above described conventions, in what follows we shall present the functional equations satisfied by $\mathcal{Z}_{\tau}$. The latter equations have been previously derived in \cite{Galleas_2013, Galleas_2016a, Galleas_2016c}.

\medskip

\noindent \textbf{Equation type A.} In our previous works we have used the \emph{Algebraic-Functional} framework put forward in \cite{Galleas_2010} to derive functional equations governing the partition function \eqref{PF}. One of them is referred to as \emph{equation type A} and it reads
\[
\label{eqA}
M_0^{(\mathcal{A})} \; \mathcal{Z}_{\tau - \gamma} (\mathrm{X}) + \sum_{i = 0}^L N_i^{(\mathcal{A})} \; \mathcal{Z}_{\tau} (\mathrm{X}_i^0) = 0 
\]
with coefficients given by
\< \label{coeffA}
M_0^{(\mathcal{A})} &\coloneqq& \frac{[\tau]}{[\tau + L \gamma]} \prod_{j=1}^{L} [x_0 - \mu_j] \nonumber \\
N_0^{(\mathcal{A})} &\coloneqq& -\frac{[\tau + \gamma]}{[\tau + (L+1)\gamma]} \prod_{j=1}^{L} [x_0 - \mu_j + \gamma] \prod_{j=1}^{L} \frac{[x_j - x_0 + \gamma]}{[x_j - x_0]}  \nonumber \\
N_i^{(\mathcal{A})} &\coloneqq& \frac{[\tau + \gamma + x_0 - x_i] [\tau - \gamma]}{[x_i - x_0][\tau + (L+1)\gamma]} \prod_{j=1}^{L} [x_i - \mu_j + \gamma] \prod_{\underset{j \neq i}{j=1}}^L \frac{[x_j - x_i + \gamma]}{[x_j - x_i]} \qquad i = 1,2, \dots , L \; . \nonumber \\
\>

\noindent \textbf{Equation type D.} Equation \eqref{eqA} is not the only functional equation describing  $\mathcal{Z}_{\tau}$ which can be derived along the lines of the \emph{Algebraic-Functional} method. In addition to \eqref{eqA}, here we shall also consider the \emph{equation type D} reading
\[
\label{eqD}
M_0^{(\mathcal{D})} \; \mathcal{Z}_{\tau + \gamma} (\mathrm{X}) + \sum_{i = 0}^L N_i^{(\mathcal{D})} \; \mathcal{Z}_{\tau} (\mathrm{X}_i^{0}) = 0 \; .
\]
The coefficients in \eqref{eqD} are then given by
\< \label{coeffD}
M_0^{(\mathcal{D})} &\coloneqq& \prod_{j=1}^{L} [x_{0} - \mu_j + \gamma] \nonumber \\
N_0^{(\mathcal{D})} &\coloneqq& - \prod_{j=1}^{L} [x_{0} - \mu_j] \prod_{j=1}^{L} \frac{[x_{0} - x_j + \gamma]}{[x_{0} - x_j]}  \nonumber \\
N_i^{(\mathcal{D})} &\coloneqq& \frac{[\gamma] [\tau + (L+1)\gamma + x_{0} - x_i]}{[x_{0} - x_i] [\tau + (L+1)\gamma]} \prod_{j=1}^{L} [x_i - \mu_j] \prod_{\substack{j=1 \\ j \neq i}}^L \frac{[x_i - x_j + \gamma]}{[x_i - x_j]} \qquad i = 1,2, \dots , L \; . \nonumber \\
\>

Both equations \eqref{eqA} and \eqref{eqD} are able to characterize $\mathcal{Z}_{\tau}$ in closed form as shown in \cite{Galleas_2013}. However, the dependence of such equations on the dynamical parameter $\tau$ creates some difficulties for obtaining determinant representations for our partition function directly from \eqref{eqA} and \eqref{eqD}. In the next section we shall then discuss reformulations of the equations type A and D which will help us circumventing that problem.

\section{Modified functional relations} \label{sec:FUN}

In this section we want to find modified versions of \eqref{eqA} and \eqref{eqD} where the dynamical parameter $\tau$ no longer plays the role of variable. In order to obtain such equations we shall then first proceed along the lines described in \cite{Galleas_2016b}. 

\begin{lem}[Modified equation A] \label{MeqA}
The partition function \eqref{PF} satisfies the functional relation
\[
\label{eqAneu}
\mathcal{M}_0^{(\mathcal{A})} \; \mathcal{Z}_{\tau} (\mathrm{X}) + \sum_{i=1}^L \mathcal{N}_i^{(\mathcal{A})} \; \mathcal{Z}_{\tau} (\mathrm{X}_i^0) + \sum_{i=1}^L \bar{\mathcal{N}}_i^{(\mathcal{A})} \; \mathcal{Z}_{\tau} (\mathrm{X}_i^{\bar{0}}) = 0 
\]
with coefficients defined as
\<
\label{coeffAneu} 
\mathcal{M}_0^{(\mathcal{A})} &\coloneqq&  \prod_{j=1}^L \frac{[x_{\bar{0}} - \mu_j + \gamma][x_j - x_{\bar{0}} + \gamma]}{[x_{\bar{0}} - \mu_j][x_j - x_{\bar{0}}]} - \prod_{j=1}^L \frac{[x_{0} - \mu_j + \gamma][x_j - x_{0} + \gamma]}{[x_{0} - \mu_j][x_j - x_{0}]} \nonumber \\
\mathcal{N}_i^{(\mathcal{A})} &\coloneqq&  \frac{[\gamma][\tau + \gamma + x_0 - x_i]}{[\tau + \gamma] [x_i - x_0]} \prod_{j=1}^L \frac{[x_i - \mu_j + \gamma]}{[x_0 - \mu_j]} \prod_{\substack{j = 1 \\ j \neq i}}^L \frac{[x_j - x_i + \gamma]}{[x_j - x_i]} \nonumber \\
\bar{\mathcal{N}}_i^{(\mathcal{A})} &\coloneqq&  \frac{[\gamma][\tau + \gamma + x_{\bar{0}} - x_i]}{[\tau + \gamma] [x_{\bar{0}} - x_i]} \prod_{j=1}^L \frac{[x_i - \mu_j + \gamma]}{[x_{\bar{0}} - \mu_j]} \prod_{\substack{j = 1 \\ j \neq i}}^L \frac{[x_j - x_i + \gamma]}{[x_j - x_i]} \; . \nonumber \\
\>
\end{lem}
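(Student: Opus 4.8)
\textbf{Proof strategy.} The plan is to exploit equation type~A, \eqref{eqA}, twice over: once with its auxiliary spectral parameter called $x_0$ and once with an independent auxiliary parameter $x_{\bar 0}$, and then to eliminate the single term that carries the shifted dynamical parameter, namely $\mathcal{Z}_{\tau-\gamma}(\mathrm{X})$. Since $x_0$ enters \eqref{eqA} merely as an extra spectral variable, and since $M_0^{(\mathcal{A})}$ is not identically zero, we may solve that relation for the troublesome term,
\[
\mathcal{Z}_{\tau-\gamma}(\mathrm{X}) = -\frac{1}{M_0^{(\mathcal{A})}}\Bigl( N_0^{(\mathcal{A})}\,\mathcal{Z}_{\tau}(\mathrm{X}) + \sum_{i=1}^{L} N_i^{(\mathcal{A})}\,\mathcal{Z}_{\tau}(\mathrm{X}_i^{0}) \Bigr),
\]
whose left-hand side is manifestly independent of $x_0$. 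Evaluating the right-hand side at $x_0$ and at $x_{\bar 0}$ and equating the two expressions therefore produces a functional identity in which every occurrence of the partition function carries the \emph{same} dynamical parameter $\tau$ — which is exactly the shape of \eqref{eqAneu}, with $\mathcal{Z}_{\tau}(\mathrm{X}_i^{\bar 0})$ arising from the $x_{\bar 0}$-copy.

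\textbf{Carrying it out.} First I would note that the division above is legitimate for generic $x_0$ and that the resulting identity extends to all values by analyticity of the theta functions. Then, multiplying through by $M_0^{(\mathcal{A})}(x_0)\,M_0^{(\mathcal{A})}(x_{\bar 0})$ and regrouping, one sees that in each ratio $N_i^{(\mathcal{A})}/M_0^{(\mathcal{A})}$ the $\tau$-dependent prefactor of $M_0^{(\mathcal{A})}$ cancels against the like factor hidden in $N_i^{(\mathcal{A})}$, so that a single common, purely $\tau$-dependent scalar factorizes out of all the coefficients and may be dropped. What remains of the coefficient of $\mathcal{Z}_{\tau}(\mathrm{X}_i^{0})$ is $\mathcal{N}_i^{(\mathcal{A})}$ and of $\mathcal{Z}_{\tau}(\mathrm{X}_i^{\bar 0})$ is $\bar{\mathcal{N}}_i^{(\mathcal{A})}$ as in \eqref{coeffAneu}, while the coefficient of $\mathcal{Z}_{\tau}(\mathrm{X})$, being the difference $N_0^{(\mathcal{A})}(x_0)/M_0^{(\mathcal{A})}(x_0)-N_0^{(\mathcal{A})}(x_{\bar 0})/M_0^{(\mathcal{A})}(x_{\bar 0})$, collapses to the difference of the two products defining $\mathcal{M}_0^{(\mathcal{A})}$.

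\textbf{Main obstacle.} The elimination itself is pure linear algebra in one unknown and uses no property of $\mathcal{Z}_{\tau}$ beyond \eqref{eqA} and $M_0^{(\mathcal{A})}\not\equiv 0$; the only step demanding care is the theta-function bookkeeping that brings the coefficients to the compact forms \eqref{coeffAneu} — one must track precisely which arguments are shifted by $\gamma$, the signs, and the $\tau$-dependent prefactors, and, where a cancellation does not close on its own, invoke the standard three-term addition identity for $[\,\cdot\,]$ to rewrite the numerators.
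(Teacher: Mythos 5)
Your proposal is correct and is essentially the paper's own argument: the paper likewise rewrites \eqref{eqA} as an expression for $\mathcal{Z}_{\tau-\gamma}(\mathrm{X})$, observes that this left-hand side is independent of $x_0$, and equates the resulting right-hand sides evaluated at $x_0$ and at the generic point $x_{\bar{0}}$, identifying the outcome with \eqref{eqAneu} (the coefficients being fixed only up to an overall common factor, exactly as in your regrouping step). No further comparison is needed.
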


\begin{lem}[Modified equation D] \label{MeqD}
The functional equation 
\[
\label{eqDneu}
\mathcal{M}_0^{(\mathcal{D})} \; \mathcal{Z}_{\tau} (\mathrm{X}) + \sum_{i=1}^L \mathcal{N}_i^{(\mathcal{D})} \; \mathcal{Z}_{\tau} (\mathrm{X}_i^0) + \sum_{i=1}^L \bar{\mathcal{N}}_i^{(\mathcal{D})} \; \mathcal{Z}_{\tau} (\mathrm{X}_i^{\bar{0}}) = 0 
\]
with coefficients reading
\<
\label{coeffDneu} 
\mathcal{M}_0^{(\mathcal{D})} &\coloneqq& \prod_{j=1}^L \frac{[x_{\bar{0}} - \mu_j][x_{\bar{0}} - x_j + \gamma]}{[x_{\bar{0}} - \mu_j + \gamma][x_{\bar{0}} - x_j]} - \prod_{j=1}^L \frac{[x_{0} - \mu_j][x_{0} - x_j + \gamma]}{[x_{0} - \mu_j + \gamma][x_{0} - x_j]} \nonumber \\
\mathcal{N}_i^{(\mathcal{D})} &\coloneqq&  \frac{[\gamma][\tau + (L+1)\gamma + x_0 - x_i]}{[\tau + (L+1)\gamma][x_0 - x_i]} \prod_{j=1}^L \frac{[x_i - \mu_j]}{[x_0 - \mu_j + \gamma]} \prod_{\substack{j = 1 \\ j \neq i}}^L \frac{[x_i - x_j + \gamma]}{[x_i - x_j]} \nonumber \\
\bar{\mathcal{N}}_i^{(\mathcal{D})} &\coloneqq&  \frac{[\gamma][\tau + (L+1)\gamma + x_{\bar{0}} - x_i]}{[\tau + (L+1)\gamma][x_i - x_{\bar{0}}]} \prod_{j=1}^L \frac{[x_i - \mu_j]}{[x_{\bar{0}} - \mu_j + \gamma]} \prod_{\substack{j = 1 \\ j \neq i}}^L \frac{[x_i - x_j + \gamma]}{[x_i - x_j]}   \nonumber \\
\>
is fulfilled by the partition function \eqref{PF}.
\end{lem}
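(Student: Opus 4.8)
The plan is to obtain the relations \eqref{eqAneu} and \eqref{eqDneu} from the equations \eqref{eqA} and \eqref{eqD} already at our disposal, by eliminating the partition functions that carry a shifted dynamical parameter — $\mathcal{Z}_{\tau-\gamma}(\mathrm{X})$ in the first case and $\mathcal{Z}_{\tau+\gamma}(\mathrm{X})$ in the second — exploiting that the auxiliary spectral variable $x_0$ in those equations is an entirely free parameter; this is the strategy of \cite{Galleas_2016b}. I shall describe Lemma~\ref{MeqA} in detail, Lemma~\ref{MeqD} being completely parallel. To fix notation, write $M_0^{(\mathcal{A})}(x_0)$, $N_i^{(\mathcal{A})}(x_0)$ for the coefficients in \eqref{coeffA} and $M_0^{(\mathcal{A})}(x_{\bar{0}})$, $N_i^{(\mathcal{A})}(x_{\bar{0}})$ for the same expressions with the auxiliary variable renamed $x_{\bar{0}}$.

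\textbf{Two copies and elimination.} Since \eqref{eqA} holds for every $x_0\in\C$, I would write it once with auxiliary variable $x_0$ and once with auxiliary variable $x_{\bar{0}}$. Both copies contain the single term $\mathcal{Z}_{\tau-\gamma}(\mathrm{X})$ — which depends on neither $x_0$ nor $x_{\bar{0}}$ — with coefficients $M_0^{(\mathcal{A})}(x_0)$ and $M_0^{(\mathcal{A})}(x_{\bar{0}})$ respectively. Forming $M_0^{(\mathcal{A})}(x_{\bar{0}})\times(\text{copy at }x_0) - M_0^{(\mathcal{A})}(x_0)\times(\text{copy at }x_{\bar{0}})$ annihilates $\mathcal{Z}_{\tau-\gamma}(\mathrm{X})$ identically and leaves a relation among $\mathcal{Z}_\tau(\mathrm{X})$ — produced by the $i=0$ terms of the two copies, since $\mathrm{X}_0^0 = \mathrm{X}_0^{\bar{0}} = \mathrm{X}$ — together with the $L$ functions $\mathcal{Z}_\tau(\mathrm{X}_i^0)$ and the $L$ functions $\mathcal{Z}_\tau(\mathrm{X}_i^{\bar{0}})$. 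Choosing $x_0,x_{\bar{0}}$ generic so that no prefactor vanishes, I would divide through by $M_0^{(\mathcal{A})}(x_0)M_0^{(\mathcal{A})}(x_{\bar{0}})$; then the coefficient of $\mathcal{Z}_\tau(\mathrm{X})$ becomes $\tfrac{N_0^{(\mathcal{A})}(x_0)}{M_0^{(\mathcal{A})}(x_0)} - \tfrac{N_0^{(\mathcal{A})}(x_{\bar{0}})}{M_0^{(\mathcal{A})}(x_{\bar{0}})}$, that of $\mathcal{Z}_\tau(\mathrm{X}_i^0)$ is $\tfrac{N_i^{(\mathcal{A})}(x_0)}{M_0^{(\mathcal{A})}(x_0)}$, and that of $\mathcal{Z}_\tau(\mathrm{X}_i^{\bar{0}})$ is $-\tfrac{N_i^{(\mathcal{A})}(x_{\bar{0}})}{M_0^{(\mathcal{A})}(x_{\bar{0}})}$.

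\textbf{Normalization.} The decisive observation, read off from \eqref{coeffA}, is that $N_0^{(\mathcal{A})}/M_0^{(\mathcal{A})}$ factorizes into the $x$-independent constant $-\tfrac{[\tau+\gamma][\tau+L\gamma]}{[\tau+(L+1)\gamma][\tau]}$ times $\prod_{j=1}^{L}\tfrac{[x_0-\mu_j+\gamma][x_j-x_0+\gamma]}{[x_0-\mu_j][x_j-x_0]}$. Hence the coefficient of $\mathcal{Z}_\tau(\mathrm{X})$ equals $\tfrac{[\tau+\gamma][\tau+L\gamma]}{[\tau+(L+1)\gamma][\tau]}\,\mathcal{M}_0^{(\mathcal{A})}$, and a final division by this same $x$-independent factor normalizes it to $\mathcal{M}_0^{(\mathcal{A})}$; the inverse factor then multiplies $\tfrac{N_i^{(\mathcal{A})}(x_0)}{M_0^{(\mathcal{A})}(x_0)}$ and, after the cancellations among the $\tau$-dependent theta factors, leaves exactly $\mathcal{N}_i^{(\mathcal{A})}$, while the oddness $[-x]=-[x]$ of the theta function \eqref{theta} rewrites $-\tfrac{N_i^{(\mathcal{A})}(x_{\bar{0}})}{M_0^{(\mathcal{A})}(x_{\bar{0}})}$ as $\bar{\mathcal{N}}_i^{(\mathcal{A})}$. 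Everything in sight is a theta-functional identity in $x_0,x_{\bar{0}}$ valid for generic arguments and hence, by analyticity, for all $x_0,x_{\bar{0}}$; this yields \eqref{eqAneu}. For Lemma~\ref{MeqD} I would repeat the argument starting from \eqref{eqD}; there $M_0^{(\mathcal{D})}$ and $N_0^{(\mathcal{D})}$ are already free of $\tau$, so no auxiliary constant enters and dividing the analogous combination by $M_0^{(\mathcal{D})}(x_0)M_0^{(\mathcal{D})}(x_{\bar{0}})$ gives $\mathcal{M}_0^{(\mathcal{D})}$, $\mathcal{N}_i^{(\mathcal{D})}$ and, again through oddness, $\bar{\mathcal{N}}_i^{(\mathcal{D})}$ directly.

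\textbf{Main obstacle.} The procedure is conceptually routine, so the real work is the bookkeeping of the $\tau$-dependent prefactors: one must check that the coefficient of $\mathcal{Z}_\tau(\mathrm{X})$ genuinely collapses to the $\tau$-free difference of products $\mathcal{M}_0^{(\mathcal{A})}$ (resp. $\mathcal{M}_0^{(\mathcal{D})}$), which is what pins down the overall normalization, and then that with that normalization the remaining coefficients reduce to exactly the $\mathcal{N}_i$, $\bar{\mathcal{N}}_i$ of \eqref{coeffAneu} and \eqref{coeffDneu} — in particular that all of $[\tau]$, $[\tau+L\gamma]$, $[\tau+(L+1)\gamma]$ drop out except for the single $[\tau+\gamma]$ (type A) or $[\tau+(L+1)\gamma]$ (type D) that survives in the denominators.
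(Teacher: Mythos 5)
Your elimination of $\mathcal{Z}_{\tau+\gamma}(\mathrm{X})$ by playing the two copies of \eqref{eqD} at $x_0$ and $x_{\bar{0}}$ against each other is precisely the paper's argument: divide by $M_0^{(\mathcal{D})}$, observe that the left-hand side is independent of the auxiliary variable, equate the evaluations at $x_0$ and $x_{\bar{0}}$, and read off the coefficients, which for type D are exactly the ratios $N_i^{(\mathcal{D})}/M_0^{(\mathcal{D})}$ with no extra normalization, as you note. The proposal is therefore correct and essentially identical to the paper's proof.
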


\begin{proof}
The proof of Lemma \ref{MeqD} has been already presented in \cite{Galleas_2016b}, and the proof of Lemma \ref{MeqA} is analogous. 
Therefore, we shall describe here the proof of Lemma \ref{MeqA} only for completeness reasons. We start our analysis by rewriting 
\eqref{eqA} as
\[
\label{PA}
\mathcal{Z}_{\tau - \gamma} (\mathrm{X}) = - \sum_{i = 0}^L \frac{N_i^{(\mathcal{A})} (x_0, x_1, \dots , x_L)}{M_0^{(\mathcal{A})} (x_0, x_1, \dots , x_L)} \; \mathcal{Z}_{\tau} (\mathrm{X}_i^0)
\]
with $M_0^{(\mathcal{A})}  = M_0^{(\mathcal{A})} (x_0, x_1, \dots , x_L)$ and $N_i^{(\mathcal{A})}  = N_i^{(\mathcal{A})} (x_0, x_1, \dots , x_L)$ in order to emphasize the dependence of the coefficients on the relevant set of variables. Next we notice the LHS of \eqref{PA} is independent of $x_0$. Consequently, one can write
\[ \label{PAA}
\sum_{i = 0}^L \frac{N_i^{(\mathcal{A})} (x_0, x_1, \dots , x_L)}{M_0^{(\mathcal{A})} (x_0, x_1, \dots , x_L)} \; \mathcal{Z}_{\tau} (\mathrm{X}_i^0) = \sum_{i = 0}^L \frac{N_i^{(\mathcal{A})} (x_{\bar{0}}, x_1, \dots , x_L)}{M_0^{(\mathcal{A})} (x_{\bar{0}}, x_1, \dots , x_L)} \; \mathcal{Z}_{\tau} (\mathrm{X}_{\bar{i}}^{\bar{0}}) 
\]
for generic variables $x_0, x_{\bar{0}} \in \C$ and $\bar{i} = i$ for $1 \leq i \leq L$. Equation \eqref{MeqA} is then identified with \eqref{PAA}.
\end{proof}

\begin{rema}
The coefficients \eqref{coeffDneu} differ from the ones presented in \cite{Galleas_2016b} only by an overall multiplicative common factor.
\end{rema}

As far as the structure of the equations \eqref{MeqA} and \eqref{MeqD} is concerned, it is important to remark it differs significantly from the original equations \eqref{eqA} and \eqref{eqD}. For instance, in \eqref{eqA} and \eqref{eqD} the dynamical parameter $\tau$ plays the role of variable whilst it can be fixed in equations \eqref{MeqA} and \eqref{MeqD}. 
On the other hand, \eqref{MeqA} and \eqref{MeqD} also depends on an extra spectral variable, namely $x_{\bar{0}}$, in addition to the set 
$\{x_0, x_1, \dots, x_L \}$ present in the original equations \eqref{eqA} and \eqref{eqD}.
Next we remind the reader our goal here is to find an equation describing $\mathcal{Z}_{\tau}$ with the same structure as the one describing the partition function of the six-vertex model with domain-wall boundaries. Such equation for the six-vertex model can be read off from \eqref{eqA} and \eqref{eqD} simply by taking the trigonometric limit of \eqref{theta} followed  by the limit $\tau \to \infty$. At first sight it seems we would achieve such structure by setting $x_{\bar{0}} = x_0$
in \eqref{MeqA} and \eqref{MeqD}. However, the inspection of the coefficients \eqref{coeffAneu} and \eqref{coeffDneu} shows this is not the case. We shall then describe an alternative route for unveiling such equation and for that it is useful to introduce the following conventions.

\begin{defi}[Permutation] \label{pij}
Let $\mathrm{Fun}(\C^{L+2})$ be the space of meromorphic functions on $\C^{L+2}$ and $\mathfrak{S}_{L+2}$ the symmetric group acting by permutations on $(x_0 , x_1 , \dots , x_L, x_{\bar{0}}) \in \C^{L+2}$. We then write $\Pi_{i,j} \in \mathcal{S}_{L+2}$ for the $2$-cycle permuting variables $x_i$ and $x_j$.
\end{defi}

Next we proceed by inspecting the action of $\Pi_{i,j}$ on equations \eqref{MeqA} and \eqref{MeqD}. As for $1 \leq i , j \leq L$ we find our equations are invariant under the action of $\Pi_{i,j}$. On the other hand, the action of $\Pi_{\bar{0},m} \circ \Pi_{0,l}$ for $0 \leq l < m \leq L$ produces new independent equations as shown in \cite{Galleas_2016b}. More precisely, the action of $\Pi_{\bar{0},m} \circ \Pi_{0,l}$ on \eqref{MeqA} and 
\eqref{MeqD} unfold each one of those equations into the system
\<
\label{MeqP}
\mathcal{P}_0^{(l,m)} \; \mathcal{Z}_{\tau} (\mathrm{X}) + \sum_{i=1}^L \mathcal{Q}_i^{(l,m)} \; \mathcal{Z}_{\tau} (\mathrm{X}_i^0) + \sum_{i=1}^L \bar{\mathcal{Q}}_i^{(l,m)} \; \mathcal{Z}_{\tau} (\mathrm{X}_i^{\bar{0}}) + \sum_{1 \leq i, j \leq L} \mathcal{R}_{ij}^{(l,m)} \; \mathcal{Z}_{\tau} ( \mathrm{X}_{i,j}^{0, \bar{0}} ) = 0  \nonumber \\
\>
with indexes $l$ and $m$ on the interval $0 \leq l < m \leq L$. The coefficients appearing in \eqref{MeqP} require some clarifications but, before discussing them, we remark \eqref{MeqP} contains one term $\mathcal{Z}_{\tau} (\mathrm{X})$, $L$ terms of the form $\mathcal{Z}_{\tau} (\mathrm{X}_i^0)$, another $L$ terms of the form $\mathcal{Z}_{\tau} (\mathrm{X}_i^{\bar{0}})$; and $L(L-1)/2$ terms $\mathcal{Z}_{\tau} ( \mathrm{X}_{i,j}^{0, \bar{0}} )$. As for the coefficients in \eqref{MeqP}, they read
\begin{align} \label{0m}
\mathcal{P}_0^{(0, m)} & \coloneqq \Pi_{\bar{0}, m} \; \bar{\mathcal{N}}_m^{(\mathcal{T})}  & \bar{\mathcal{Q}}_j^{(0, m)} & \coloneqq \begin{cases}  \Pi_{\bar{0}, m} \; \mathcal{M}_0^{(\mathcal{T})} \qquad j=m \nonumber \\ \Pi_{\bar{0}, m} \; \bar{\mathcal{N}}_j^{(\mathcal{T})}  \qquad  \mbox{otherwise} \end{cases} \nonumber \\
\mathcal{Q}_j^{(0, m)} & \coloneqq \begin{cases}  \Pi_{\bar{0}, m} \; \mathcal{N}_m^{(\mathcal{T})}  \qquad j=m \nonumber \\ 0 \qquad \qquad \mbox{otherwise} \end{cases} & \mathcal{R}_{ij}^{(0, m)} & \coloneqq \begin{cases}  \Pi_{\bar{0}, m} \; \mathcal{N}_i^{(\mathcal{T})}  \qquad j=m \nonumber \\ \Pi_{\bar{0}, m} \; \mathcal{N}_j^{(\mathcal{T})} \qquad i=m \nonumber \\ 0 \qquad \qquad \;\;\;  \mbox{otherwise} \end{cases} \nonumber \\
\end{align}
for $1 \leq m \leq L$ and 
\begin{align} \label{lm}
\mathcal{P}_0^{(l, m)} & \coloneqq 0 & \bar{\mathcal{Q}}_j^{(l, m)} & \coloneqq \begin{cases}  \Pi_{\bar{0}, m} \circ \Pi_{0, l} \; \bar{\mathcal{N}}_l^{(\mathcal{T})}  \qquad j=l \nonumber \\ \Pi_{\bar{0}, m} \circ \Pi_{0, l} \; \mathcal{N}_l^{(\mathcal{T})}  \qquad  j=m \nonumber \\ 0 \qquad \qquad \qquad \quad \mbox{otherwise} \end{cases} \nonumber \\
\mathcal{Q}_j^{(l, m)} & \coloneqq \begin{cases}  \Pi_{\bar{0}, m} \circ \Pi_{0, l} \; \bar{\mathcal{N}}_m^{(\mathcal{T})}  \qquad j=l \nonumber \\ \Pi_{\bar{0}, m} \circ \Pi_{0, l} \; \mathcal{N}_m^{(\mathcal{T})}  \qquad j=m \nonumber \\ 0 \qquad \qquad \qquad \;\; \mbox{otherwise} \end{cases} & \mathcal{R}_{ij}^{(l, m)} & \coloneqq \begin{cases} \Pi_{\bar{0}, m} \circ \Pi_{0, l} \; \mathcal{M}_0^{(\mathcal{T})}  \qquad i=l, j=m \nonumber \\ \Pi_{\bar{0}, m} \circ \Pi_{0, l} \; \bar{\mathcal{N}}_j^{(\mathcal{T})}  \qquad \; i=l, j \neq m \nonumber \\ \Pi_{\bar{0}, m} \circ \Pi_{0, l} \; \mathcal{N}_j^{(\mathcal{T})} \qquad \; i=m \nonumber \\ \Pi_{\bar{0}, m} \circ \Pi_{0, l} \; \bar{\mathcal{N}}_i^{(\mathcal{T})} \qquad \; j=l \nonumber \\ \Pi_{\bar{0}, m} \circ \Pi_{0, l} \; \mathcal{N}_i^{(\mathcal{T})}  \qquad \; j=m, i \neq l \nonumber \\ 0 \qquad \qquad \quad \quad \quad \;\; \mbox{otherwise} \end{cases} \; . \nonumber \\
\end{align}
for $1 \leq l < m \leq L$. Also, in \eqref{0m} and \eqref{lm} we have introduced the label $\mathcal{T} \in \{ \mathcal{A} , \mathcal{D} \}$
in order to capture the implementation of our procedure for both equations \eqref{MeqA} and \eqref{MeqD}. Moreover, we remark the original equations \eqref{MeqA} and \eqref{MeqD} are also included in the structure \eqref{MeqP}; and they will be identified by the indexes $l=0$ and $m = \bar{0}$. In this way, we write
\begin{align} \label{0b0}
\mathcal{P}_0^{(0, 0)} & \coloneqq \mathcal{M}_0^{(\mathcal{T})}  & \bar{\mathcal{Q}}_j^{(0, 0)} & \coloneqq \bar{\mathcal{N}}_j^{(\mathcal{T})}  \nonumber \\
\mathcal{Q}_j^{(0, 0)} & \coloneqq \mathcal{N}_j^{(\mathcal{T})}  & \mathcal{R}_{ij}^{(0, 0)} & \coloneqq 0 \quad . \nonumber \\
\end{align}

Our next step is to solve the system \eqref{MeqP}, including the equation $(l,m) = (0, \bar{0})$, for the terms $\mathcal{Z}_{\tau} (\mathrm{X})$, 
$\mathcal{Z}_{\tau} (\mathrm{X}_i^{\bar{0}})$ ($1 \leq i \leq L$) and $\mathcal{Z}_{\tau} ( \mathrm{X}_{i,j}^{0, \bar{0}} )$ ($1 \leq i < j \leq L$). 
We clearly have enough equations for that and this procedure allows us to express each one of such functions in terms of $\mathcal{Z}_{\tau} (\mathrm{X}_i^{0})$ ($1 \leq i \leq L$). Using Cramer's method we then find neat expressions for the above-mentioned functions 
and, in particular, the solution for $\mathcal{Z}_{\tau} (\mathrm{X})$ can be written as the equation
\[ \label{HZ}
\sum_{i=0}^L \mathcal{H}_i^{(\mathcal{T})} \; \mathcal{Z}_{\tau} (\mathrm{X}_i^{0}) = 0
\]
with coefficients $\mathcal{H}_i^{(\mathcal{T})} $ expressed in terms of determinants. More precisely, $\mathcal{H}_i^{(\mathcal{T})}  = \mathrm{det}(\Omega_i^{(\mathcal{T})} )$ and in order to present the explicit form of the matrices $\Omega_i^{(\mathcal{T})}$, it is convenient to introduce the labels $m_{i,j}, n_{k,l} \colon \Z \times \Z \to \Z$ defined as
\<
m_{i,j} &\coloneqq& L\; i - \frac{i(i+1)}{2} + j +1  \nonumber \\
n_{k,l} &\coloneqq& L (k-1) - \frac{k(k+1)}{2} + l \;. 
\>
Hence, we have 
\[ \label{W0}
\Omega_0^{(\mathcal{T})}  = \begin{pmatrix} \widehat{\mathcal{P}_0} & \widehat{\mathcal{Q}} & \widehat{\mathcal{R}} \end{pmatrix}
\]
with $\widehat{\mathcal{P}_0}$ a $\frac{L(L+1)+2}{2} \times 1$ matrix with entries $(\widehat{\mathcal{P}_0})_{m_{i,j},1} \coloneqq \mathcal{P}_0^{(i, j)}$ for $i=j=0$ and $0 \leq i < j \leq L$. The matrix $\widehat{\mathcal{Q}}$, in its turn, has dimension $\frac{L(L+1)+2}{2} \times L$ and entries
$\widehat{\mathcal{Q}}_{m_{i,j},k} \coloneqq \bar{\mathcal{Q}}_k^{(i, j)}$ for $i=j=0$, $0 \leq i < j \leq L$ and $1 \leq k \leq L$.
Lastly, we have the entries $\widehat{\mathcal{R}}_{m_{i,j}, n_{k,l}} \coloneqq \mathcal{R}_{kl}^{(i, j)}$ for $i=j=0$, $0 \leq i < j \leq L$ and $1 \leq k < l \leq L$ forming the $\frac{L(L+1)+2}{2} \times \frac{L(L-1)}{2}$-dimensional matrix $\widehat{\mathcal{R}}$.
The remaining matrices $\Omega_k^{(\mathcal{T})}$ then read
\[ \label{Wk}
\Omega_k^{(\mathcal{T})} = \begin{pmatrix} \widehat{\mathcal{P}_k} & \widehat{\mathcal{Q}} & \widehat{\mathcal{R}} \end{pmatrix}
\]
with $(\widehat{\mathcal{P}_k})_{m_{i,j},1} \coloneqq \mathcal{Q}_k^{(i, j)}$ for indexes $i=j=0$ and $0 \leq i < j \leq L$. In this way, the matrix
$\widehat{\mathcal{P}_k}$ is also a $\frac{L(L+1)+2}{2} \times 1$-dimensional matrix. 

\medskip

Equation \eqref{HZ} can now be recognized as the \emph{fundamental} functional equation discussed in \Secref{sec:INTRO}. Although such equation has been shown to describe several quantities of interest, the coefficients of \eqref{HZ} still exhibit additional features that were not present in the equations previously analyzed. For instance, the coefficients $\mathcal{H}_i^{(\mathcal{T})}$ depend on the additional spectral parameter $x_{\bar{0}}$ which does not appear in the arguments of the functions $\mathcal{Z}_{\tau}$. Therefore, \eqref{HZ} corresponds to a family of fundamental functional equations with the same structure but different coefficients. The latter could then be unveiled by the series expansion of $\mathcal{H}_i^{(\mathcal{T})}$ in $x_{\bar{0}}$. Alternatively, one could also produce new fundamental equations by simply replacing $x_{\bar{0}} \mapsto z_l$ in each coefficient $\mathcal{H}_i^{(\mathcal{T})}$.
This unusual dependence on an extra parameter could also be used for solving \eqref{HZ} along the lines described in \cite{Galleas_2016b} but, before that, one would still need to clarify the number of linearly independent equations obtained through such procedures. 
In this way, here we will simply consider the approach used in \cite{Galleas_2016b} for the six-vertex model which makes use of the symmetric group action for 
enlarging \eqref{HZ} to a system of linear functional equations. The latter can then be straightforwardly used to derive determinantal solutions.

\section{Determinantal solutions} \label{sec:DET}

This section is devoted to finding the unique solution of \eqref{HZ} and the consequent determination of the partition function $\mathcal{Z}_{\tau}$. As for that we shall proceed along the lines described in \cite{Galleas_2016b} for the six-vertex model and first consider the action of $\Pi_{0,j}$ with $1 \leq j \leq \L$ on \eqref{HZ}. This procedure then uncovers the set of relations 
\[ \label{HZsys}
\sum_{i=0}^L \mathcal{H}_i^{(\mathcal{T}, \; j)} \; \mathcal{Z}_{\tau} (\mathrm{X}_i^{0}) = 0 \qquad \qquad 1 \leq j \leq L
\]
with coefficients
\<
\mathcal{H}_i^{(\mathcal{T}, \; j)} \coloneqq \begin{cases}
\Pi_{0,j} \mathcal{H}_j^{(\mathcal{T})} \qquad i=0 \\
\Pi_{0,j} \mathcal{H}_0^{(\mathcal{T})} \qquad i=j \\
\Pi_{0,j} \mathcal{H}_i^{(\mathcal{T})} \qquad i\neq 0, j
\end{cases} \; . 
\>
Next we exploit the fact that \eqref{HZsys} also exhibits the structure of a linear system of algebraic equations and use the $L$ equations contained in \eqref{HZsys} to express each function $\mathcal{Z}_{\tau} (\mathrm{X}_i^{0})$ for $1 \leq i \leq L$ in terms of $\mathcal{Z}_{\tau} (\mathrm{X})$. 
In this way, we find
\[ \label{ZtoZ}
\mathcal{Z}_{\tau} (\mathrm{X}_i^{0}) = \frac{\mathrm{det}(\mathcal{W}_i^{(\mathcal{T})})}{\mathrm{det}(\mathcal{W}_0^{(\mathcal{T})} )} \mathcal{Z}_{\tau} (\mathrm{X})
\]
where $\mathcal{W}_i^{(\mathcal{T})}$ and $\mathcal{W}_0^{(\mathcal{T})}$ are $L \times L$ matrices. Their entries are simply defined as
\<
(\mathcal{W}_0^{(\mathcal{T})})_{\alpha, \beta} &\coloneqq& \mathcal{H}_{\alpha}^{(\mathcal{T}, \; \beta)} \qquad 1 \leq \alpha , \beta \leq L \nonumber \\
(\mathcal{W}_i^{(\mathcal{T})})_{\alpha, \beta} &\coloneqq& \begin{cases}
-\mathcal{H}_{0}^{(\mathcal{T} , \; \beta)} \qquad \beta = i \\
\mathcal{H}_{\alpha}^{(\mathcal{T} , \; \beta)} \qquad \text{otherwise}
\end{cases} \; .
\>
The relation \eqref{ZtoZ} can now be regarded as an one-variable functional equation and separation of variables allows us to conclude 
$\mathcal{Z}_{\tau} (\mathrm{X}_i^{0}) = \mathrm{det}(\mathcal{W}_i^{(\mathcal{T})}) f(x_0, x_1, x_2, \dots , x_L)$ and 
$\mathcal{Z}_{\tau} (\mathrm{X}) = \mathrm{det}(\mathcal{W}_0^{(\mathcal{T})}) f(x_0, x_1, x_2, \dots , x_L)$ for a given function $f$.
As for the determination of $f$ we then use the basic conditions
\[
\frac{\partial}{\partial x_i} \left( \mathrm{det}(\mathcal{W}_i^{(\mathcal{T})}) f  \right) = 0
\]
for $0 \leq i \leq L$ as discussed in details in \cite{Galleas_2016b}. In order to present our results in a more compact manner let us also write $p_L \coloneqq L^2 (L+1)/2$ and $q_L \coloneqq (L-1)(L^2+2)/2$. In this way, by collecting the results so far we have
\< \label{repA}
\mathcal{Z}_{\tau} (\mathrm{X}_i^0) &=& - \left( \frac{[\tau + \gamma]}{[L \gamma]} \right)^{p_L} \left( \frac{[(L+1)\gamma]}{[\tau]} \right)^{q_{L}} \prod_{k=1}^{L-1} \left( \frac{[k \gamma]}{[\tau + (k+1)\gamma]} \right)^{L} \prod_{x \in \mathrm{X}_i^0} \prod_{j=1}^L [x - \mu_j + \gamma] \nonumber \\
&& \times \; \left( \frac{[\sum_{l=1}^L (x_l - \mu_l) - \gamma]}{[\sum_{l=1}^L (x_l - \mu_l) + \tau + L\gamma]}   \right) \frac{\mathrm{det}(\mathcal{W}_i^{(\mathcal{A})})  }{\left. \mathrm{det}(\mathcal{W}_i^{(\mathcal{A})})\right|_{\tau = - (L+1)\gamma}}
\>
for index $i$ in the range $0 \leq i \leq L$. Also, here we recall the entries of $\mathcal{W}_i^{(\mathcal{A})}$ are built out of the coefficients associated to the modified equation A \eqref{MeqA}.
Alternatively, one could also have considered $\mathcal{T} = \mathcal{D}$ and in that case one obtains the representations
\< \label{repD}
\mathcal{Z}_{\tau} (\mathrm{X}_i^0) &=& (-1)^{L} \frac{[\tau + (L+1)\gamma]}{[\tau + \gamma]} \left( \frac{[\tau + (L+1)\gamma]}{[L \gamma]} \right)^{p_L} \left( \frac{[(L+1)\gamma]}{[\tau + (L+2) \gamma]} \right)^{q_{L}} \prod_{k=1}^{L-1} \left( \frac{[k \gamma]}{[\tau + (k+1)\gamma]} \right)^{L} \nonumber \\
&& \times \;  \left( \frac{[\sum_{l=1}^L (x_l - \mu_l) + (L+1)\gamma]}{[\sum_{l=1}^L (x_l - \mu_l) + \tau + (L+2)\gamma]}   \right) \prod_{x \in \mathrm{X}_i^0} \prod_{j=1}^L [x - \mu_j ] \; \frac{\mathrm{det}(\mathcal{W}_i^{(\mathcal{D})})  }{\left. \mathrm{det}(\mathcal{W}_i^{(\mathcal{D})})\right|_{\tau = - \gamma}} \nonumber \\
\>
for $0 \leq i \leq L$. Here we also remark the matrices $\mathcal{W}_i^{(\mathcal{D})}$ are built out of the coefficients \eqref{coeffDneu} associated to the modified equation D \eqref{MeqD}; and explicit examples of matrices $\mathcal{W}_0^{(\mathcal{T})}$ for small lattice lengths 
can be found in \Appref{app:EG}.

\begin{rema}
Formulae \eqref{repA} and \eqref{repD} for each value of the index $i$ seem to correspond to an independent representation since
$\Pi_{0,i} \; \mathrm{det}(\mathcal{W}_i^{(\mathcal{T})})$ does not seem to be trivially related to $\mathrm{det}(\mathcal{W}_0^{(\mathcal{T})})$. In this way, each one of the formulae \eqref{repA} and \eqref{repD} describes $L+1$ two-parameters 
families of continuous representations.
\end{rema}

\section{Concluding remarks} \label{sec:REM}

In this work we have elaborated on the \emph{Algebraic-Functional} approach to the partition function of the 8VSOS model with domain-wall boundary conditions. In particular, we have shown the latter partition function obeys the same type of functional equation as its six-vertex model counterpart. The significance of this result is twofold: on the one hand we have enlarged the diversity of quantities governed by such type of linear functional equation; and on the other hand, we have consequently obtained new determinantal representations for the aforementioned partition function. 

Although the functional equation in discussion, namely \eqref{HZ}, exhibits the same structure as the equations previously studied in the context of the six-vertex model, it still possesses additional features that were not encountered before. For instance, its coefficients 
$\mathcal{H}_i^{(\mathcal{T})}$ also depend non-trivially on an extra spectral parameter playing no role in the arguments of the functions
$\mathcal{Z}_{\tau}$. In this way, we can regard \eqref{HZ} as a continuous family of functional equations as discussed in \Secref{sec:FUN}.
Moreover, one advantage of having some quantity described by equations with the structure of \eqref{HZ} is that determinantal solutions follow naturally with little dependence on the explicit form of the equation's coefficients. In this way, we have used \eqref{HZ} to find novel 
determinantal representations for the partition function of the 8VSOS model with domain-wall boundaries. 

Our new determinants also exhibit some unusual features that have not been found previously in the literature to the best of our knowledge. For instance, here we have obtained determinants of $L\times L$ matrices whose entries are also determinants. These sub-determinants are in their turn taken over $[\frac{L(L+1)+2}{2}] \times [\frac{L(L+1)+2}{2}]$ matrices and at first look this feature suggests Dodgson's condensation method plays some role in our results. However, the existence of a precise relation with the condensation method still remains unclear to us.

\bibliographystyle{alpha}
\bibliography{references}

\begin{thebibliography}{LBT13b}

\bibitem[Bax71]{Baxter_1971}
R.~J. Baxter.
\newblock Eight vertex model in lattice statistics.
\newblock {\em Phys. Rev. Lett.}, 26:832, 1971.

\bibitem[Bax07]{Baxter_book}
R.~J. Baxter.
\newblock {\em {Exactly Solved Models in Statistical Mechanics}}.
\newblock Dover Publications, Inc., Mineola, New York, 2007.

\bibitem[DIK13]{Deift_2013}
P.~Deift, A.~Its, and I.~Krasovsky.
\newblock {Toeplitz Matrices and Toeplitz Determinants under the Impetus of the
  Ising Model: Some History and Some Recent Results}.
\newblock {\em {Comm. Pure Appl. Math.}}, 66(9):1360--1438, 2013.

\bibitem[Gal10]{Galleas_2010}
W.~Galleas.
\newblock Functional relations for the six-vertex model with domain wall
  boundary conditions.
\newblock {\em J. Stat. Mech.}, 06:P06008, 2010.

\bibitem[Gal12]{Galleas_2012}
W.~Galleas.
\newblock {Multiple integral representation for the trigonometric SOS model
  with domain wall boundaries}.
\newblock {\em {Nucl. Phys. B}}, {858}({1}):{117--141}, {2012}.

\bibitem[Gal13]{Galleas_2013}
W.~Galleas.
\newblock {Refined functional relations for the elliptic SOS model}.
\newblock {\em {Nucl. Phys. B}}, {867}:{855--871}, {2013}.

\bibitem[Gal14]{Galleas_SCP}
W.~Galleas.
\newblock {Scalar product of Bethe vectors from functional equations}.
\newblock {\em {Comm. Math. Phys.}}, {329}({1}):{141--167}, {2014}.

\bibitem[Gal15]{Galleas_openSCP}
W.~Galleas.
\newblock {Off-shell scalar products for the $XXZ$ spin chain with open
  boundaries}.
\newblock {\em Nucl. Phys. B}, 893:346--375, {2015}.

\bibitem[Gal16a]{Galleas_2016}
W.~Galleas.
\newblock {New differential equations in the six-vertex model}.
\newblock {\em {J. Stat. Mech.}}, ({3}):33106--33118, {2016}.

\bibitem[Gal16b]{Galleas_2016b}
W.~Galleas.
\newblock {On the elliptic $\mathfrak{gl}_2 $ solid-on-solid model: functional
  relations and determinants}.
\newblock {\em to appear in J. Math. Phys.}, arXiv: 1606.06144 [math-ph], 2016.

\bibitem[Gal16c]{Galleas_2016a}
W.~Galleas.
\newblock Partition function of the elliptic solid-on-solid model as a single
  determinant.
\newblock {\em Phys. Rev. E}, 94(1):010102, 2016.

\bibitem[Gal17]{Galleas_2016c}
W.~Galleas.
\newblock {Continuous representations of scalar products of Bethe vectors}.
\newblock {\em J. Math. Phys.}, 58(8):083504, 2017.

\bibitem[Gal18]{Galleas_2017}
W.~Galleas.
\newblock {Six-vertex model and non-linear differential equations I. Spectral
  problem}.
\newblock {\em {Comm. Math. Phys.}}, {363}:{59--96}, {2018}.

\bibitem[GL14]{Galleas_Lamers_2014}
W.~Galleas and J.~Lamers.
\newblock {Reflection algebra and functional equations}.
\newblock {\em {Nucl. Phys. B}}, {886}({0}):{1003--1028}, {2014}.

\bibitem[Ize87]{Izergin_1987}
A.~G. Izergin.
\newblock Partition function of the six-vertex model in a finite lattice.
\newblock {\em Sov. Phys. Dokl.}, 32:878, 1987.

\bibitem[KBI93]{Korepin_book}
V.~E. Korepin, N.~M. Bogoliubov, and A.~G. Izergin.
\newblock {\em Quantum inverse scattering method and correlation functions}.
\newblock Cambridge University Press, 1993.

\bibitem[KO49]{Onsager_Kauf_1949}
B.~Kaufman and L.~Onsager.
\newblock {Crystal statistics III. Short-range order in a binary Ising
  lattice}.
\newblock {\em {Phys. Rev.}}, {76}({8}):{1244--1252}, {1949}.

\bibitem[Kor82]{Korepin_1982}
V.~E. Korepin.
\newblock {Calculation of norms of Bethe wave functions}.
\newblock {\em Commun. Math. Phys.}, 86:391--418, 1982.

\bibitem[Lam15]{Lamers_2015}
J.~Lamers.
\newblock {Integral formula for elliptic SOS models with domain walls and a
  reflecting end}.
\newblock {\em {Nucl. Phys. B}}, {901}:{556--583}, {2015}.

\bibitem[Lax68]{Lax_1968}
P.~D. Lax.
\newblock {Integrals of nonlinear equations of evolution and solitary waves}.
\newblock {\em Comm. Pure Applied Math.}, 21:467--490, 1968.

\bibitem[LBT13a]{Terras_2013a}
D.~Levy-Bencheton and V.~Terras.
\newblock {An algebraic Bethe ansatz approach to form factors and correlation
  functions of the cyclic eight-vertex solid-on-solid model}.
\newblock {\em J. Stat. Mech.}, 04:P04015, 2013.

\bibitem[LBT13b]{Terras_2013b}
D.~Levy-Bencheton and V.~Terras.
\newblock {Spontaneous staggered polarizations of the cyclic solid-on-solid
  model from the algebraic Bethe Ansatz}.
\newblock {\em {J. Stat. Mech.}}, {10}:{P10012}, {2013}.

\bibitem[Lie67]{Lieb_1967}
E.~H. Lieb.
\newblock {Residual entropy of square lattice}.
\newblock {\em Phys. Rev.}, 162(1):162, 1967.

\bibitem[LW72]{Lieb_review}
E.~H. Lieb and F.~Y. Wu.
\newblock {Two-dimensional ferroelectric models}.
\newblock In C.~Domb and M.~Green, editors, {\em {Phase Transitions and
  Critical Phenomena}}, volume~{1}, pages {331--490}. {Academic Press}, {1972}.

\bibitem[PRS08]{Pakuliak_2008}
S.~Pakuliak, V.~Rubtsov, and A.~Silantyev.
\newblock {SOS model partition function and the elliptic weight function}.
\newblock {\em J. Phys. A}, 41:295204, 2008.

\bibitem[Ros09]{Rosengren_2009}
H.~Rosengren.
\newblock {An Izergin-Korepin type identity for the 8VSOS model with
  applications to alternating sign matrices}.
\newblock {\em Adv. Appl. Math.}, 43:137--155, 2009.

\bibitem[WW27]{Whittaker_Watson_book}
E.~T. Whittaker and G.~N Watson.
\newblock {\em A Course of Modern Analysis}.
\newblock Cambridge University Press, fourth edition, 1927.

\end{thebibliography}

\appendix
%
%
%
%

\section{Matrices $\mathcal{W}_0^{(\mathcal{T})}$} \label{app:EG}
 
In this appendix we give explicit expressions for the matrices $\mathcal{W}_0^{(\mathcal{T})}$ entering formulae \eqref{repA} and \eqref{repD} for small lattice lenghts. As for the case $L=1$, the matrix $\mathcal{W}_0^{(\mathcal{A})}$ is a $1 \times 1$ matrix with single entry reading
\<
\left| \begin{matrix}
\frac{[x_0 - x_{\bar{0}} + \gamma][x_{\bar{0}} - \mu_1 + \gamma]}{[x_0 - x_{\bar{0}}][x_{\bar{0}} - \mu_1]}  - \frac{[x_0 - x_{1} + \gamma][x_{1} - \mu_1 + \gamma]}{[x_0 - x_{1}][x_{1} - \mu_1]} &  \frac{[\gamma] [x_{\bar{0}} - x_0 + \tau +  \gamma] [x_0 - \mu_1 + \gamma]} {[\tau + \gamma] [x_{\bar{0}} - x_0] [x_{\bar{0}} - \mu_1]} \\
\frac{[\gamma] [x_0 - x_{\bar{0}}  + \tau +  \gamma] [x_{\bar{0}} - \mu_1 + \gamma]} {[\tau + \gamma] [x_0 - x_{\bar{0}}] [x_0 - \mu_1]}  & 
\frac{[x_{\bar{0}} - x_0  + \gamma][x_0 - \mu_1 + \gamma]}{[x_{\bar{0}} - x_0][x_0 - \mu_1]}  - \frac{[x_{\bar{0}} - x_{1} + \gamma][x_{1} - \mu_1 + \gamma]}{[x_{\bar{0}} - x_{1}][x_{1} - \mu_1]}
\end{matrix} \right| \; . \nonumber \\
\>
 Similarly, the single entry of $\mathcal{W}_0^{(\mathcal{D})}$ for $L=1$ is given by
 \<
\left| \begin{matrix}
\frac{[x_{\bar{0}} - x_0  + \gamma][x_{\bar{0}} - \mu_1]}{[x_{\bar{0}} - x_0][x_{\bar{0}} - \mu_1 + \gamma]}  - \frac{[x_1 - x_{0} + \gamma][x_{1} - \mu_1]}{[x_1 - x_{0}][x_{1} - \mu_1 + \gamma]} &  \frac{[\gamma] [x_{\bar{0}} - x_0 + \tau +  2\gamma] [x_0 - \mu_1]} {[\tau + 2\gamma] [x_0 - x_{\bar{0}}] [x_{\bar{0}} - \mu_1 + \gamma]} \\
\frac{[\gamma] [x_0 - x_{\bar{0}}  + \tau +  2\gamma] [x_{\bar{0}} - \mu_1]} {[\tau + 2\gamma] [x_{\bar{0}} - x_0] [x_0 - \mu_1 + \gamma]}  & 
\frac{[x_0 - x_{\bar{0}} + \gamma][x_0 - \mu_1]}{[x_0 - x_{\bar{0}}][x_0 - \mu_1 + \gamma]}  - \frac{[x_1 - x_{\bar{0}} + \gamma][x_{1} - \mu_1]}{[x_1 - x_{\bar{0}}][x_{1} - \mu_1 + \gamma]}
\end{matrix} \right| \; . \nonumber \\
\>
 
Now turning our attention to the case $L=2$, the matrices $\mathcal{W}_0^{(\mathcal{A})}$ and $\mathcal{W}_0^{(\mathcal{D})}$ are $2 \times 2$ matrices with entries given by $4 \times 4$ determinants. In order to conveniently present such matrices we then write 
 \[
 \mathcal{W}_0^{(\mathcal{T})} = \begin{pmatrix}
 | \mathcal{K}_1^{(\mathcal{T})} | & | \mathcal{K}_2^{(\mathcal{T})} | \\
 | \mathcal{K}_3^{(\mathcal{T})} | & | \mathcal{K}_4^{(\mathcal{T})} |  
 \end{pmatrix}
 \]
 and introduce functions
 \<
 \mathcal{U}^{i,j}_{k,l} &\coloneqq& \frac{[\gamma] [x_i - x_k + \gamma] [x_j - x_k + \tau + \gamma]}{[\tau + \gamma][x_i - x_k] [x_j - x_k]} \prod_{t=1}^2 \frac{[x_k - \mu_t + \gamma]}{[x_l - \mu_t]} \nonumber \\
  \bar{\mathcal{U}}^{i,j}_{k,l} &\coloneqq& -\frac{[\gamma] [x_k - x_i + \gamma] [x_j - x_k + \tau + 3\gamma]}{[\tau + 3\gamma][x_k - x_i] [x_j - x_k]} \prod_{t=1}^2 \frac{[x_k - \mu_t]}{[x_l - \mu_t + \gamma]} \nonumber \\
  \mathcal{V}^{i,j}_{k,l} &\coloneqq& \prod_{s \in \{k,l\}} \frac{[x_s - x_i + \gamma]}{[x_s - x_i]} \prod_{t=1}^2 \frac{[x_i - \mu_t + \gamma]}{[x_i - \mu_t]} -  \prod_{s \in \{k,l\}} \frac{[x_s - x_j + \gamma]}{[x_s - x_j]} \prod_{t=1}^2 \frac{[x_j - \mu_t + \gamma]}{[x_j - \mu_t]} \nonumber \\
  \bar{\mathcal{V}}^{i,j}_{k,l} &\coloneqq& \prod_{s \in \{k,l\}} \frac{[x_i - x_s + \gamma]}{[x_i - x_s]} \prod_{t=1}^2 \frac{[x_i - \mu_t]}{[x_i - \mu_t + \gamma]} -  \prod_{s \in \{k,l\}} \frac{[x_j - x_s + \gamma]}{[x_j - x_s]} \prod_{t=1}^2 \frac{[x_j - \mu_t]}{[x_j - \mu_t + \gamma]} \; .\nonumber \\
 \>
 In this way, we have
 \begin{align}
 \mathcal{K}_1^{(\mathcal{A})} = & \begin{pmatrix}
 \mathcal{V}^{\bar{0}, 1}_{0,2} & \mathcal{U}^{2, \bar{0}}_{0, \bar{0}} & \mathcal{U}^{0, \bar{0}}_{2, \bar{0}} & 0 \\
 \mathcal{U}^{2,0}_{\bar{0}, 0} & \mathcal{V}^{0,1}_{\bar{0},2} & \mathcal{U}^{\bar{0},0}_{2,0} & - \mathcal{U}^{\bar{0}, 1}_{2,1} \\
 \mathcal{U}^{0, 2}_{\bar{0}, 2} & \mathcal{U}^{\bar{0},2}_{0,2} & \mathcal{V}^{2,1}_{0, \bar{0}} & - \mathcal{U}^{\bar{0}, 1}_{0,1} \\
 0 & \mathcal{U}^{\bar{0}, 2}_{1, 2} & - \mathcal{U}^{\bar{0}, 0}_{1, 0} & \mathcal{V}^{2,0}_{\bar{0},1} 
 \end{pmatrix} &
 \mathcal{K}_2^{(\mathcal{A})} &= \begin{pmatrix}
 -\mathcal{U}^{0, 1}_{2,1} & \mathcal{U}^{2, \bar{0}}_{0, \bar{0}} & \mathcal{U}^{0, \bar{0}}_{2, \bar{0}} & 0 \\
 0 & \mathcal{V}^{0,1}_{\bar{0},2} & \mathcal{U}^{\bar{0},0}_{2,0} & - \mathcal{U}^{\bar{0}, 1}_{2,1} \\
 -\mathcal{U}^{0, 1}_{\bar{0}, 1} & \mathcal{U}^{\bar{0},2}_{0,2} & \mathcal{V}^{2,1}_{0, \bar{0}} & - \mathcal{U}^{\bar{0}, 1}_{0,1} \\
 -\mathcal{U}^{1, 0}_{\bar{0}, 0} & \mathcal{U}^{\bar{0}, 2}_{1, 2} & - \mathcal{U}^{\bar{0}, 0}_{1, 0} & \mathcal{V}^{2,0}_{\bar{0},1} 
 \end{pmatrix} \nonumber \\ \nonumber \\
 \mathcal{K}_3^{(\mathcal{A})} = &\begin{pmatrix}
- \mathcal{U}^{0, 2}_{1,2} & \mathcal{U}^{0, \bar{0}}_{1, \bar{0}} & \mathcal{U}^{1, \bar{0}}_{0, \bar{0}} & 0 \\
- \mathcal{U}^{0,2}_{\bar{0}, 2} & \mathcal{V}^{1,2}_{0,\bar{0}} & \mathcal{U}^{\bar{0},1}_{0,1} & - \mathcal{U}^{\bar{0}, 2}_{0,2} \\
 0 & \mathcal{U}^{\bar{0},0}_{1,0} & \mathcal{V}^{0,2}_{\bar{0}, 1} & - \mathcal{U}^{\bar{0}, 2}_{1,2} \\
 \mathcal{U}^{2, 0}_{\bar{0}, 0}  & \mathcal{U}^{\bar{0}, 0}_{2, 0} & - \mathcal{U}^{\bar{0}, 1}_{2, 1} & \mathcal{V}^{0,1}_{\bar{0},2} 
 \end{pmatrix} &
 \mathcal{K}_4^{(\mathcal{A})} & = \begin{pmatrix}
 \mathcal{V}^{\bar{0}, 2}_{0,1} & \mathcal{U}^{0, \bar{0}}_{1, \bar{0}} & \mathcal{U}^{1, \bar{0}}_{0, \bar{0}} & 0 \\
 \mathcal{U}^{0,1}_{\bar{0}, 1} & \mathcal{V}^{1,2}_{0,\bar{0}} & \mathcal{U}^{\bar{0},1}_{0,1} & - \mathcal{U}^{\bar{0}, 2}_{0,2} \\
 \mathcal{U}^{1, 0}_{\bar{0}, 0} & \mathcal{U}^{\bar{0},0}_{1,0} & \mathcal{V}^{0,2}_{\bar{0},1} & - \mathcal{U}^{\bar{0}, 2}_{1,2} \\
 0 & \mathcal{U}^{\bar{0}, 0}_{2, 0} & - \mathcal{U}^{\bar{0}, 1}_{2, 1} & \mathcal{V}^{0,1}_{\bar{0},2}
 \end{pmatrix} \; . \nonumber \\
\end{align}
The matrices $\mathcal{K}_i^{(\mathcal{D})}$ in their turn are given by
\begin{align}
 \mathcal{K}_1^{(\mathcal{D})} = & \begin{pmatrix}
 \bar{\mathcal{V}}^{\bar{0}, 1}_{0,2} & \bar{\mathcal{U}}^{2, \bar{0}}_{0, \bar{0}} & \bar{\mathcal{U}}^{0, \bar{0}}_{2, \bar{0}} & 0 \\
 \bar{\mathcal{U}}^{2,0}_{\bar{0}, 0} & \bar{\mathcal{V}}^{0,1}_{\bar{0},2} & \bar{\mathcal{U}}^{\bar{0},0}_{2,0} & - \bar{\mathcal{U}}^{\bar{0}, 1}_{2,1} \\
 \bar{\mathcal{U}}^{0, 2}_{\bar{0}, 2} & \bar{\mathcal{U}}^{\bar{0},2}_{0,2} & \bar{\mathcal{V}}^{2,1}_{0, \bar{0}} & - \bar{\mathcal{U}}^{\bar{0}, 1}_{0,1} \\
 0 & \bar{\mathcal{U}}^{\bar{0}, 2}_{1, 2} & - \bar{\mathcal{U}}^{\bar{0}, 0}_{1, 0} & \bar{\mathcal{V}}^{2,0}_{\bar{0},1} 
 \end{pmatrix} &
 \mathcal{K}_2^{(\mathcal{D})} &= \begin{pmatrix}
 -\bar{\mathcal{U}}^{0, 1}_{2,1} & \bar{\mathcal{U}}^{2, \bar{0}}_{0, \bar{0}} & \bar{\mathcal{U}}^{0, \bar{0}}_{2, \bar{0}} & 0 \\
 0 & \bar{\mathcal{V}}^{0,1}_{\bar{0},2} & \bar{\mathcal{U}}^{\bar{0},0}_{2,0} & - \bar{\mathcal{U}}^{\bar{0}, 1}_{2,1} \\
 -\bar{\mathcal{U}}^{0, 1}_{\bar{0}, 1} & \bar{\mathcal{U}}^{\bar{0},2}_{0,2} & \bar{\mathcal{V}}^{2,1}_{0, \bar{0}} & - \bar{\mathcal{U}}^{\bar{0}, 1}_{0,1} \\
 -\bar{\mathcal{U}}^{1, 0}_{\bar{0}, 0} & \bar{\mathcal{U}}^{\bar{0}, 2}_{1, 2} & - \bar{\mathcal{U}}^{\bar{0}, 0}_{1, 0} & \bar{\mathcal{V}}^{2,0}_{\bar{0},1} 
 \end{pmatrix} \nonumber \\ \nonumber \\
 \mathcal{K}_3^{(\mathcal{D})} = &\begin{pmatrix}
- \bar{\mathcal{U}}^{0, 2}_{1,2} & \bar{\mathcal{U}}^{0, \bar{0}}_{1, \bar{0}} & \bar{\mathcal{U}}^{1, \bar{0}}_{0, \bar{0}} & 0 \\
- \bar{\mathcal{U}}^{0,2}_{\bar{0}, 2} & \bar{\mathcal{V}}^{1,2}_{0,\bar{0}} & \bar{\mathcal{U}}^{\bar{0},1}_{0,1} & - \bar{\mathcal{U}}^{\bar{0}, 2}_{0,2} \\
 0 & \bar{\mathcal{U}}^{\bar{0},0}_{1,0} & \bar{\mathcal{V}}^{0,2}_{\bar{0}, 1} & - \bar{\mathcal{U}}^{\bar{0}, 2}_{1,2} \\
 \bar{\mathcal{U}}^{2, 0}_{\bar{0}, 0}  & \bar{\mathcal{U}}^{\bar{0}, 0}_{2, 0} & - \bar{\mathcal{U}}^{\bar{0}, 1}_{2, 1} & \bar{\mathcal{V}}^{0,1}_{\bar{0},2} 
 \end{pmatrix} &
 \mathcal{K}_4^{(\mathcal{D})} & = \begin{pmatrix}
 \bar{\mathcal{V}}^{\bar{0}, 2}_{0,1} & \bar{\mathcal{U}}^{0, \bar{0}}_{1, \bar{0}} & \bar{\mathcal{U}}^{1, \bar{0}}_{0, \bar{0}} & 0 \\
 \bar{\mathcal{U}}^{0,1}_{\bar{0}, 1} & \bar{\mathcal{V}}^{1,2}_{0,\bar{0}} & \bar{\mathcal{U}}^{\bar{0},1}_{0,1} & - \bar{\mathcal{U}}^{\bar{0}, 2}_{0,2} \\
 \bar{\mathcal{U}}^{1, 0}_{\bar{0}, 0} & \bar{\mathcal{U}}^{\bar{0},0}_{1,0} & \bar{\mathcal{V}}^{0,2}_{\bar{0},1} & - \bar{\mathcal{U}}^{\bar{0}, 2}_{1,2} \\
 0 & \bar{\mathcal{U}}^{\bar{0}, 0}_{2, 0} & - \bar{\mathcal{U}}^{\bar{0}, 1}_{2, 1} & \bar{\mathcal{V}}^{0,1}_{\bar{0},2}
 \end{pmatrix} \; . \nonumber \\
\end{align} 
 
 \vskip 3cm

\end{document}